\newcommand{\bfs}{\mathrm{BFS}}
\newcommand{\notinbfs}{black!25}
\begin{document}

\title{BFS Enumeration for Breaking Symmetries in Graphs}
%
%
\author{Vyacheslav Moklev\inst{1, 2} \and Vladimir Ulyantsev\inst{1}}

\institute{ITMO University, Saint-Petersburg, Russia \and 
JetBrains Research, Saint-Petersburg, Russia\\
\email{\{moklev, ulyantsev\}@rain.ifmo.ru}
}

\maketitle              

\begin{abstract}
There are numerous NP-hard combinatorial problems which involve searching for an undirected graph satisfying a certain property. One way to solve such problems is to translate a problem into an instance of the boolean satisfiability (SAT) or constraint satisfaction (CSP) problem. Such reduction usually can give rise to numerous isomorphic representations of the same graph. One way to reduce the search space and speed up the search under these conditions is to introduce symmetry-breaking predicates. In this paper we introduce three novel and practically effective symmetry-breaking predicates for an undirected connected graph search based on breadth-first search (BFS) enumeration and compare with existing symmetry-breaking methods on several graph problems.
\keywords{Symmetry breaking, Graph search, Boolean satisfiability, Combinatorial problems}
\end{abstract}
\section{Introduction}
\label{sec:1}
The search problems of a certain automaton or a graph are encountered in 
grammatical inference and natural language processing. For most of them it is proved
to be NP-hard or no polynomial solution known. On the other hand, in recent years modern SAT-solvers have been developed
and now they are powerful tools for solving huge SAT instances.
Every year SAT competitions are held and some winners are able to solve SAT instances
with millions of clauses and variables.
For some generalizations of SAT such as CSP and SMT efficient solvers
\cite{nethercote2007minizinc,stuckey2014minizinc} and optimizing compilers into SAT~\cite{metodi2012compiling} exist.
Some of the automaton and graph search problems can be efficiently translated into a SAT instance
and solved by a SAT- or CSP-solver.
In this paper we consider only graph problems, but such translations 
are widely used in many other problems, like identifying matrices~\cite{lynce2007breaking}, scheduling~\cite{crawford1994experiment} etc. 
\par 
In most graph search problems we want to find an unlabeled graph (without enumeration of nodes) and we constrain only a graph's structure, but not it's enumeration. But the most common graph representations force to enumerate all nodes. Such enumeration give rise to numerous representations of the same unlabeled graph, extending the search space and slowing down the search. Such representations are usually called \emph{symmetries} and they usually occur in many problems~\cite{devriendt2016improved,gent1999symmetry}. A common 
technique under these conditions is to use \emph{symmetry-breaking} predicates.
The main idea of symmetry breaking is to introduce some additional constraints
to reduce the number of isomorphic solutions but keep at least one solution from each
isomorphism class. Such constraints allow the SAT solver to find conflicts earlier and thus increase the performance of the search.
\par
There are also some powerful tools (like \texttt{nauty}) that can find all 
isomorphic graphs from a given set. But they are not applicable in our context
since we want to eliminate isomorphs during the search, cut some branches 
of the search tree and reduce the search space.
\par 
Symmetry-breaking constraints for graph problems have been widely studied for the last years.
There are several approaches to construct such constraints.
One popular way to break symmetries in graphs is closely
connected with \emph{canonical} representation of graph~--
lexicographically minimum graph with respect to a certain order
over adjacency matrices. Works~\cite{heule30quest,itzhakov2016breaking} are related to 
finding a ``perfect'' (that eliminates \emph{all} symmetries) symmetry-breaking predicate 
for small graphs (\cite{itzhakov2016breaking} is based on canonicity) and in~\cite{Codish1,miller2012diamond} several properties were found 
which hold for every canonical representation of graph.
\par
A different approach was considered in~\cite{cuong2016computing}. This work is related
to the search of maximum (in the number of edges) unavoidable subgraphs of a 
given complete graph. Enumerating vertices of such subgraph is
a way to break some symmetries. 
\par
A new method of symmetry breaking that eliminates all symmetries in automata
search problems was introduced recently~\cite{ulyantsev2015bfs}. This method
is based on BFS enumeration of an automaton, which is unique for
every isomorphism class. In this work we propose an adaptation of
this approach for undirected graph search problems.
\par
We introduce three symmetry-breaking predicates for an undirected graph search.
The first one is based on the approach from~\cite{ulyantsev2015bfs}. The second and
third ones are improvements of this predicate aiming to eliminate more symmetries.
We prove the correctness of these predicates and compare them with existing ones.
We applied these methods to two combinatorial problems from 
extremal graph theory and conducted an experiment. We implemented
methods from~\cite{Codish1,cuong2016computing} and our best method works faster for almost 
all test cases.

\section{Definitions}
\label{sec:2}

One of the most important notions related to the symmetry-breaking techniques is the \emph{isomorphism} of the different objects. It is typical that a lot of objects' representations are ambiguous (with respect to the problem), i.e. for one
object there are several representations. For each unlabeled graph there are numerous representations that differ only in enumeration of vertices. Such graphs are called \emph{isomorphic}.
Graph search problems are typically invariant under graph isomorphism: for each
isomorphism class either all or none of graphs from this class are solutions.
\par 
During the search a solver has to check several isomorphic graphs, but it is enough to check
only one representative from each isomorphism class. One way to help the solver to avoid the
checking of such symmetrical solutions (therefore speeding up the search) is to introduce
\emph{symmetry-breaking predicates}. 
A symmetry-breaking predicate (SBP) over graphs is a boolean function (constraint) for a graph
that allows at least one graph from each isomorphism class (but as few as possible).
SBP must allow at least one graph from each isomorphism class to prevent 
the loss of solutions. But sometimes we have prior information that 
all solutions of the problem have some property (e.g. all solutions
are connected graphs). In such situations we can eliminate a whole 
isomorphism class of not connected graphs without the loss of solutions
and we have to allow at least one graph only from isomorphism class
of connected graphs. Such predicates are called \emph{instance dependent}~\cite{itzhakov2016breaking}
symmetry-breaking predicates.
\par 
In this work we introduce three instance dependent symmetry-breaking 
predicates for an undirected connected graph search.

\section{Symmetry breaking}
\label{sec:3}

Our approach to break symmetries is based on the idea of \emph{BFS-enumeration}
introduced in~\cite{ulyantsev2015bfs}.

\begin{definition}
Graph $G$ is \emph{BFS-enumerated} $(P_\bfs(G) = 1)$ if there exists a BFS traversal such that for all $k$ from $1$ to $|V(G)|$,
$k$-th vertex in this traversal has the number (label) $k$.
Otherwise ${P_\bfs(G) = 0}$.
\end{definition}

Some examples of BFS-enumerated graphs are shown in 
Figure~\ref{fig:isomorphic2}.
\par

To encode this constraint into CSP we define integer variables 
$p_i,\ i \in 2..|V|$ which denote a label of the parent of node $i$.
Then we constrain these variables like in \cite{ulyantsev2015bfs}
(where $A[i, j]$ is a $(i, j)$-th element of adjacency matrix):
\begin{equation}
\forall i: p_i \le p_{i+1},
\end{equation}
\begin{equation}
\forall i, j: p_j = i \Leftrightarrow A[i, j] \wedge \nexists k < i: A[k, j].
\end{equation}

\par
This is just a translation of an automaton predicate from~\cite{ulyantsev2015bfs} to graph problems.
Actually, $P_\bfs(G)$ is a symmetry-breaking predicate among connected 
graphs (instance dependent SBP)  that we proved in the Theorem \ref{thm:bfs} in Appendix~\ref{apd:A}.
\par
The main drawback of this predicate is that the start vertex could be arbitrary.
In Figure~\ref{fig:isomorphic2} three isomorphic graphs (with BFS-trees) are shown
that are allowed by $P_{\bfs}$ (arrows are arcs from BFS traversal).

One way to fix a start vertex is to choose a vertex with a maximum \emph{degree} (number of adjacent vertices).
\begin{definition}
Let $G$ be a connected graph. Then $P_\bfs^+(G) = 1$ if and only if $P_\bfs(G) = 1$ and 
$\deg{v_1} = \max\limits_{1 \le k \le n}\deg{v_k}$.
\end{definition}

A new predicate $P_\bfs^+(G)$ is also a symmetry-breaking predicate for
connected graphs that we proved in the Theorem \ref{thm:bfs_plus} in Appendix~\ref{apd:A}.

\newcommand{\scalepr}{0.8}

\begin{figure}[ht]
\begin{minipage}[t]{.65\textwidth}
\centering 


\begin{subfigure}
  \centering
    \begin{tikzpicture}
      \tikzstyle{every node}=[draw,shape=circle];
      \node (v1) at (\scalepr * 1, 2 * \scalepr) {\small $1$};
      \node (v2) at (\scalepr * 0, 1 * \scalepr) {\small $4$};
      \node (v3) at (\scalepr * 2, 1 * \scalepr) {\small $2$};
      \node (v4) at (\scalepr * 1, 0 * \scalepr) {\small $3$};
      \tikzset{every node/.style={}}
      \node (d1) at (\scalepr * 1.5, 2 * \scalepr) {\small $\ 1$};
      \node (d2) at (\scalepr * 0.5, 1 * \scalepr) {\small $\ 1$};
      \node (d3) at (\scalepr * 2.5, 1 * \scalepr) {\small $\ 2$};
      \node (d4) at (\scalepr * 1.5, 0 * \scalepr) {\small $\ 2$};
      \draw [arrows={-latex}] (v1) -- (v3);
      \draw [arrows={-latex}] (v3) -- (v4);
      \draw [arrows={-latex}] (v4) -- (v2);
    \end{tikzpicture}
\end{subfigure}%
\hfill
\begin{subfigure}
  \centering
    \begin{tikzpicture}
      \tikzstyle{every node}=[draw,shape=circle];
      \node (v1) at (\scalepr * 1, 2 * \scalepr) {\small $1$};
      \node (v2) at (\scalepr * 0, 1 * \scalepr) {\small $2$};
      \node (v3) at (\scalepr * 2, 1 * \scalepr) {\small $3$};
      \node (v4) at (\scalepr * 1, 0 * \scalepr) {\small $4$};
      \tikzset{every node/.style={}}
      \node (d1) at (\scalepr * 1.5, 2 * \scalepr) {\small $\ 2$};
      \node (d2) at (\scalepr * 0.5, 1 * \scalepr) {\small $\ 2$};
      \node (d3) at (\scalepr * 2.5, 1 * \scalepr) {\small $\ 1$};
      \node (d4) at (\scalepr * 1.5, 0 * \scalepr) {\small $\ 1$};
      \draw [arrows={-latex}] (v1) -- (v2);
      \draw [arrows={-latex}] (v1) -- (v3);
      \draw [arrows={-latex}] (v2) -- (v4);
    \end{tikzpicture}
\end{subfigure}%
\hfill
\begin{subfigure}
  \centering
    \begin{tikzpicture}
      \tikzstyle{every node}=[draw,shape=circle];
      \node (v1) at (\scalepr * 1, 2 * \scalepr) {\small $1$};
      \node (v2) at (\scalepr * 0, 1 * \scalepr) {\small $2$};
      \node (v3) at (\scalepr * 2, 1 * \scalepr) {\small $3$};
      \node (v4) at (\scalepr * 1, 0 * \scalepr) {\small $4$};
      \tikzset{every node/.style={}}
      \node (d1) at (\scalepr * 1.5, 2 * \scalepr) {\small $\ 2$};
      \node (d2) at (\scalepr * 0.5, 1 * \scalepr) {\small $\ 1$};
      \node (d3) at (\scalepr * 2.5, 1 * \scalepr) {\small $\ 2$};
      \node (d4) at (\scalepr * 1.5, 0 * \scalepr) {\small $\ 1$};
      \draw [arrows={-latex}] (v1) -- (v2);
      \draw [arrows={-latex}] (v1) -- (v3);
      \draw [arrows={-latex}] (v3) -- (v4);
    \end{tikzpicture}
\end{subfigure}
\caption{Isomorphic graphs that satisfy $P_\bfs$ predicate. Each vertex is marked with 
its number (inside a circle) and its degree (outside).}
\label{fig:isomorphic2}


\end{minipage}%
\hfill
\begin{minipage}[t]{.31\textwidth}
\centering 

\begin{subfigure}
  \centering
    \begin{tikzpicture}
      \tikzstyle{every node}=[draw,shape=circle];
      \node (v1) at (\scalepr * 1, 2 * \scalepr) {\small $1$};
      \node (v2) at (\scalepr * 0, 1 * \scalepr) {\small $2$};
      \node (v3) at (\scalepr * 2, 1 * \scalepr) {\small $3$};
      \node (v4) at (\scalepr * 1, 0 * \scalepr) {\small $4$};
      \tikzset{every node/.style={}}
      \node (w1) at (\scalepr * 1.7, 1.96 * \scalepr) {\small $\ \ [2, 1]$};
      \node (w2) at (\scalepr * 0.55, 0.96 * \scalepr) {\small $\ \ [1]$};
      \draw [arrows={-latex}] (v1) -- (v2);
      \draw [arrows={-latex}] (v1) -- (v3);
      \draw [arrows={-latex}] (v2) -- (v4);
      \draw [\notinbfs] (v3) -- (v4);
    \end{tikzpicture}
\end{subfigure}
\caption{Counterexample for non-descending weights}
\label{fig:antiexample}

\end{minipage}

\end{figure}

To encode $P_\bfs^+$ into CSP we introduce additional integer variables ${deg_i,\ i \in 1..|V|}$ and $deg_{max}$ which denote degrees of each node in graph and the maximum degree among the nodes in the graph respectively. Then we add the following constraints to define these variables:
\begin{equation}
\forall i: deg_i \le deg_{max},
\end{equation}
\begin{equation}
deg_{max} = \max_{i \in 1..|V|} deg_i,
\end{equation}
\begin{equation}
\forall i: deg_i = \sum_{j = 1}^{|V|} A[i, j],
\end{equation}
\begin{equation}
deg_1 = deg_{max}.
\end{equation}

In Figure~\ref{fig:isomorphic2} three isomorphic graphs which satisfy $P_\bfs$ are shown but the first is not allowed by $P_\bfs^+$. 
So $P_\bfs^+(G)$ eliminates more symmetries than $P_\bfs$, but still not all of them. We partially 
solve an issue of arbitrary start vertex, but we rest a lot of symmetries.
BFS traversal partitions all vertices into \emph{layers}~-- sets of vertices
of equal depth in a BFS-tree (distance from $v_1$). But neither $P_\bfs$ nor $P_\bfs^+$
constrains the order of
vertices in the layer. So after fixing a start vertex (and therefore fixing all layers), the
order of vertices within a layer may be arbitrary.
\par 
To eliminate this kind of symmetry we propose an approach of ordering vertices
in a layer by \emph{weight of subtree}. The weight of subtree for a vertex $v$ is a number of \emph{descendants} of $v$ (vertices below $v$)
in a the BFS-tree, including the vertex $v$ itself. So, the weight of subtree for the start vertex $v_1$
equals the number of vertices in a graph and the weight of subtree for a leaf is equal to~$1$.
An example of different enumerations based on reordering of vertices in the layer
is shown in Figure~\ref{fig:lemma3.1.1} from Appendix~\ref{apd:A}.
\par 
Now we can introduce the symmetry-breaking predicate based on the weight of subtree.

\begin{definition}
\label{def:bfs_star}
Let $G$ be a connected graph. $P_\bfs^*(G) = 1$ if and only if 
$P_\bfs^+(G) = 1$ and for any vertex $v \in V(G)$ children of $v$ in BFS-tree 
are sorted by weight of subtree: 
$\forall v \in V(G): w(\mathrm{child}(v)_1) \ge w(\mathrm{child}(v)_2) \ge \ldots \ge w(\mathrm{child}(v)_k)$, where $k = |\mathrm{child}(v)|$.
\end{definition}

To encode $P_\bfs^*$ in CSP we introduce new integer variables $w_i,\ i \in 1..|V|$ which denote a weight of subtree of this node in the BFS-tree. To define them we add the following constraints ($[a] = 1$ if $a$ is $true$, $[a] = 0$ otherwise):
\begin{equation}
\forall i: p_i = p_{i+1} \Rightarrow w_i \ge w_{i + 1},
\end{equation}
\begin{equation}
\forall i: w_i = 1 + \sum_{j = i + 1}^{|V|} w_j\cdot [p_j = i].
\end{equation}

$P_\bfs^*$ is also a symmetry-breaking predicate, what we proved in Theorem \ref{thm:bfs_star} in Appendix~\ref{apd:A}. An example of an unlabeled graph and 
a proper enumeration allowed by $P_\bfs^*$ in shown in Figure~\ref{fig:sophisticated_example}.
Note that Theorem \ref{thm:bfs_star} is not so trivial as might first appear.
In fact, were we to change the order, taking $w_1 \le w_2 \le \ldots \le w_n$ 
instead, it would not define a symmetry breaking constraint.
The smallest counterexample is shown in Figure~\ref{fig:antiexample}.
There exists only one BFS-numbered graph isomorphic to $C_4$ (cycle of 4 vertices). But $w(1) = 2$ and $w(2) = 1$
so there is no graph isomorphic to $C_4$, BFS-numbered and with
non-descending weights.
\par 

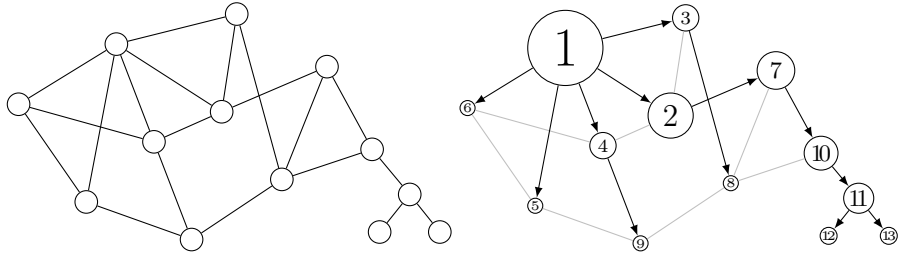
\begin{figure}[ht]
\centering
\begin{subfigure}
  \centering
    \begin{tikzpicture}
      \tikzstyle{every node}=[draw,shape=circle];
      \node[minimum size = 0.3cm, inner sep = 0pt] (v1)  at (1.6, 3.5) {};
      \node[minimum size = 0.3cm, inner sep = 0pt] (v2)  at (3.0, 2.6) {};
      \node[minimum size = 0.3cm, inner sep = 0pt] (v3)  at (3.2, 3.9) {};
      \node[minimum size = 0.3cm, inner sep = 0pt] (v4)  at (2.1, 2.2) {};
      \node[minimum size = 0.3cm, inner sep = 0pt] (v5)  at (1.2, 1.4) {};
      \node[minimum size = 0.3cm, inner sep = 0pt] (v6)  at (0.3, 2.7) {};
      \node[minimum size = 0.3cm, inner sep = 0pt] (v7)  at (4.4, 3.2) {};
      \node[minimum size = 0.3cm, inner sep = 0pt] (v8)  at (3.8, 1.7) {};
      \node[minimum size = 0.3cm, inner sep = 0pt] (v9)  at (2.6, 0.9) {};
      \node[minimum size = 0.3cm, inner sep = 0pt] (v10) at (5.0, 2.1) {};
      \node[minimum size = 0.3cm, inner sep = 0pt] (v11) at (5.5, 1.5) {};
      \node[minimum size = 0.3cm, inner sep = 0pt] (v12) at (5.1, 1.0) {};
      \node[minimum size = 0.3cm, inner sep = 0pt] (v13) at (5.9, 1.0) {};
      
      \draw (v6) -- (v4);
      \draw (v6) -- (v5);
      \draw (v5) -- (v9);
      \draw (v2) -- (v4);
      \draw (v3) -- (v2);
      \draw (v9) -- (v8);
      \draw (v8) -- (v7);
      \draw (v8) -- (v10);
      
      \draw (v1) -- (v2);
      \draw (v1) -- (v3);
      \draw (v1) -- (v4);
      \draw (v1) -- (v5);
      \draw (v1) -- (v6);
      \draw (v2) -- (v7);
      \draw (v3) -- (v8);
      \draw (v4) -- (v9);
      \draw (v7) -- (v10);
      \draw (v10) -- (v11);
      \draw (v11) -- (v12);
      \draw (v11) -- (v13);
      
    \end{tikzpicture}
\end{subfigure}%
\begin{subfigure}
  \centering
    \begin{tikzpicture}
      \tikzstyle{every node}=[draw,shape=circle];
      \node[minimum size = 1.0cm,  inner sep = 0pt] (v1)  at (1.6, 3.5) {\huge $1$};
      \node[minimum size = 0.6cm,  inner sep = 0pt] (v2)  at (3.0, 2.6) {\large $2$};
      \node[minimum size = 0.35cm, inner sep = 0pt] (v3)  at (3.2, 3.9) {\scriptsize $3$};
      \node[minimum size = 0.35cm, inner sep = 0pt] (v4)  at (2.1, 2.2) {\scriptsize $4$};
      \node[minimum size = 0.2cm,  inner sep = 0pt] (v5)  at (1.2, 1.4) {\tiny $5$};
      \node[minimum size = 0.2cm,  inner sep = 0pt] (v6)  at (0.3, 2.7) {\tiny $6$};
      \node[minimum size = 0.5cm,  inner sep = 0pt] (v7)  at (4.4, 3.2) {\small $7$};
      \node[minimum size = 0.2cm,  inner sep = 0pt] (v8)  at (3.8, 1.7) {\tiny $8$};
      \node[minimum size = 0.2cm,  inner sep = 0pt] (v9)  at (2.6, 0.9) {\tiny $9$};
      \node[minimum size = 0.45cm, inner sep = 0pt] (v10) at (5.0, 2.1) {\footnotesize $1\!0$};
      \node[minimum size = 0.4cm,  inner sep = 0pt] (v11) at (5.5, 1.5) {\footnotesize $1\!1$};
      \node[minimum size = 0.2cm,  inner sep = 0pt] (v12) at (5.1, 1.0) {\tiny $1\!2$};
      \node[minimum size = 0.2cm,  inner sep = 0pt] (v13) at (5.9, 1.0) {\tiny $1\!3$};
      
      \draw [\notinbfs] (v6) -- (v4);
      \draw [\notinbfs] (v6) -- (v5);
      \draw [\notinbfs] (v5) -- (v9);
      \draw [\notinbfs] (v2) -- (v4);
      \draw [\notinbfs] (v3) -- (v2);
      \draw [\notinbfs] (v9) -- (v8);
      \draw [\notinbfs] (v8) -- (v7);
      \draw [\notinbfs] (v8) -- (v10);
      
      \draw [arrows={-latex}] (v1) -- (v2);
      \draw [arrows={-latex}] (v1) -- (v3);
      \draw [arrows={-latex}] (v1) -- (v4);
      \draw [arrows={-latex}] (v1) -- (v5);
      \draw [arrows={-latex}] (v1) -- (v6);
      \draw [arrows={-latex}] (v2) -- (v7);
      \draw [arrows={-latex}] (v3) -- (v8);
      \draw [arrows={-latex}] (v4) -- (v9);
      \draw [arrows={-latex}] (v7) -- (v10);
      \draw [arrows={-latex}] (v10) -- (v11);
      \draw [arrows={-latex}] (v11) -- (v12);
      \draw [arrows={-latex}] (v11) -- (v13);
      
    \end{tikzpicture}
\end{subfigure}
\caption{Unlabeled graph (left) and one of its enumerations (right), satisfying a $P_\bfs^*$ predicate. Size of vertex denotes the weight of subtree of this vertex, black arcs represent a BFS-tree.}
\label{fig:sophisticated_example}
\end{figure}

Note that $P_\bfs^*$ allows only one graph from Figure~\ref{fig:isomorphic2}, because the first graph is not allowed by $P_\bfs^+$
and the third graph has ascending weights of subtree for children of
start vertex: $[1, 2]$. Only second graph is BFS-enumerated, has
a start vertex of maximum degree and has all weights sorted in the 
appropriate order.
\par 
Another example is shown in Figure~\ref{fig:weight_example}. After fixing the start vertex and
all layers, we have 6 possible permutations of vertices inside the second layer. Numbers under 
curly braces are the weights of subtrees of vertices from the second layer. Note that only one 
enumeration satisfies $P_\bfs^*$, because other enumerations produce sequences of weights which
are not sorted or sorted in wrong order, so $P_\bfs^*(G_1) = 1$ and $P_\bfs^*(G_k) = 0$ for $k = 2..6$.

\input{weight_example}

\section{Extremal graph problems}
\label{sec:4}
We applied our methods to the following graph problems to show the efficiency of proposed methods and to compare with existing ones.
\begin{definition} 
$ex(n; G_1, \ldots, G_k)$ is the maximum number of edges in a graph with $n$
vertices and without subgraphs isomorphic to $G_1, \ldots, G_k$.
\par
$EX(n; G_1, \ldots, G_k)$ is the set of extremal graphs~-- with no subgraphs isomorphic to $G_1, \ldots, G_k$ and with maximum number of edges.
\end{definition}
\begin{example}
$ex(n; C_3)$ is the maximum number of edges in triangle-free graph with $n$ vertices. 
Well known that $ex(n; C_3) = \lfloor n^2/4 \rfloor$ and extremal graph is
$K_{\lfloor n/2 \rfloor, \lceil n/2 \rceil}$.
\end{example}
For $ex(n; C_3, C_4)$ and $ex(n; C_4)$ asymptotically precise estimations are known:
in~\cite{Erdos1} it is shown that $ex(n; C_3, C_4) = (1/2 + o(1))^{3/2}n^{3/2}$ and
$ex(n; C_4) = (1/2 + o(1))n^{3/2}$ is known from~\cite{Clapham1}.	

\subsection{Determining of $ex(n; C_3, C_4)$}
\label{sec:4.1}

This problem was considered in~\cite{Garnick2,Abajo1,Abajo2,Garnick1,Wang1}. 
Let $n = |V(G)|$, $m = |E(G)|$. We translate a problem into a CSP instance
by a slightly modified model from~\cite{Codish1}:
\begin{equation}
\label{eqn:constraint1}
\forall {(1\le i < j \le n)}: (A[i, j] \equiv A[j, i] \text{ and } A[i, i] \equiv false),
\end{equation}
\begin{equation}
\label{eqn:constraint2}
\forall {i, j, k}: A[i, j] + A[j, k] + A[k, i] < 3,
\end{equation}
\begin{equation}
\label{eqn:constraint3}
\forall {i, j, k, l}: A[i, j] + A[j, k] + A[k, l] + A[l, i] < 4,
\end{equation}
\begin{equation}
\label{eqn:constraint4}
\sum\limits_{i, j} A[i, j] = 2m,
\end{equation}
\begin{equation}
\label{eqn:constraint5}
\forall {(1\le i < n)}: 
\begin{pmatrix} 
\delta \le \sum\limits_{1\le j \le n}A[i, j] \le \Delta \\ 
\min\limits_i \left(\sum\limits_{1\le j \le n}A[i, j]\right) = \delta \\	
\max\limits_i \left(\sum\limits_{1\le j \le n}A[i, j]\right) = \Delta
\end{pmatrix}.
\end{equation}
Constraint \ref{eqn:constraint1} is referred to the symmetry of adjacency matrix and to the absence of loops. 
Constraints \ref{eqn:constraint2} and \ref{eqn:constraint3} stand for
no 3- and 4-cycles. Constraint \ref{eqn:constraint4} fixes the number of edges. 
Constraint \ref{eqn:constraint5} introduces degrees of vertices.
\par 
Similarly to ~\cite{Codish1} we alternatively introduce constraints \ref{eqn:constraint2a} 
and \ref{eqn:constraint3a} which generate $O(n^3)$ basic constraints, instead of $O(n^4)$ 
as constraints \ref{eqn:constraint2} and \ref{eqn:constraint3} do.
We introduce additional boolean variables $x_{i, k}$ and $x_{i, j, k}$:
$$\forall {i < k}: x_{i, j, k} \leftrightarrow A[i, j] \wedge A[j, k],$$
$$\forall {i < k}: x_{i, k} \leftrightarrow \bigvee \left\{ x_{i, j, k}\ |\ j \ne i, j \ne k \right\}.$$
With these variables we can express constraints \ref{eqn:constraint2} and \ref{eqn:constraint3}
with less amount of basic constraints:
\begin{equation}
\label{eqn:constraint2a}
\forall {i<k}: A[i, k] + x_{i, k} < 2 \tag{$2^\prime$},
\end{equation}
\begin{equation}
\label{eqn:constraint3a}
\forall {i<k}: \sum_j x_{i, j, k} < 2 \tag{$3^\prime$}.
\end{equation}

To optimize our model we used some properties of graphs from $EX(n; C_3, C_4)$,
proved in~\cite{Garnick2}:
\begin{equation}
\label{eqn:constraint6}
n \ge 1 + \Delta \delta \ge 1 + \delta^2,
\end{equation}
\begin{equation}
\label{eqn:constraint7}
\delta \ge m - ex(n - 1; C_3, C_4),
\end{equation}
\begin{equation}
\label{eqn:constraint8}
\Delta \ge \lceil 2m / n \rceil.
\end{equation}
	
In our experiments we use both versions of model (with constraints 
\ref{eqn:constraint2}-\ref{eqn:constraint3} and with 
\ref{eqn:constraint2a}-\ref{eqn:constraint3a}) and compare six configurations:
baseline (no breaking symmetries) and breaking symmetries with
$P_\bfs$, $P_\bfs^+$, $P_\bfs^*$, $P_\mathrm{UNAVOID}$ from~\cite{cuong2016computing} and $\mathrm{sb}_\ell^*$ from~\cite{Codish1}.
Computations were performed by AMD Opteron 6378 @ 2.4 GHz on 4 cores and the time limit was one hour. 
CSP model was compiled into SAT by BEE~\cite{metodi2012compiling}, SAT instance was solved
by treengeling~\cite{biere2016splatz}. We compare an efficiency of predicates on both satisfiable
($m = ex(n; C_3, C_4)$) and unsatisfiable ($m = ex(n; C_3, C_4) + 1$) cases. 
It turned out that constraints \ref{eqn:constraint2a}-\ref{eqn:constraint3a} 
are almost always more efficient than \ref{eqn:constraint2}-\ref{eqn:constraint3}, so 
only the former results are presented in Table \ref{table:table2} (sat case) and Table \ref{table:table2U} (unsat case). A ``--'' denotes that computations exceeded the time limit of four hours.
Some examples of the solutions found are shown in Figure~\ref{fig:solution_34_10}.

It worth to note that modern solvers use randomization a lot during the search. It implies that 
for the case of solution existence a time to find the solution may vary a lot. But this is not the
case for the non-existence of solution, because in this case the solver has to traverse the whole 
search space no matter in what order. In this work to get statistically valuable results we perform 
a series of 50 experiments for each combination of $n$ and a symmetry break in the sat case. For the
unsat case we perform only 5 experiments and it turned out that all results were the same. For the
sat case we present a median of 50 measured values.

\newcommand{\scalep}{0.02 * 0.8}

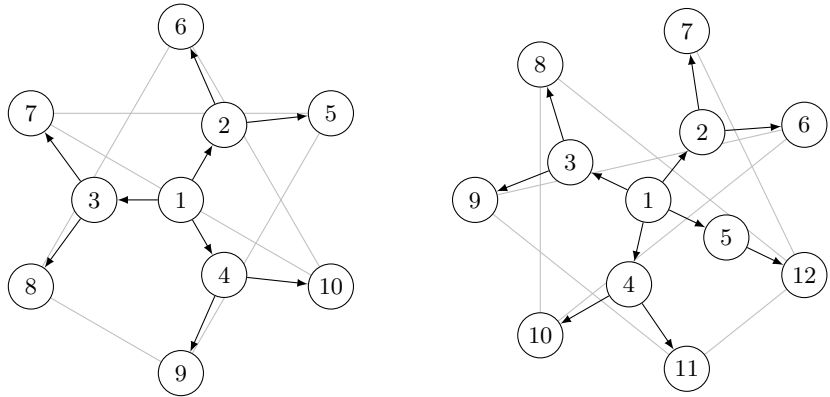
\begin{figure}
\centering
\begin{minipage}{0.5\textwidth}
\centering
\begin{tikzpicture}
	\tikzstyle{every node}=[draw,shape=circle,inner sep=0mm, minimum size=0.6cm,fill=white];
	\node (v1) at (\scalep * 142.71,162 * \scalep) {\small $1$};
	\node (v2) at (\scalep * 178.71,224.35 * \scalep) {\small $2$};
	\node (v3) at (\scalep * 70.708,162 * \scalep) {\small $3$};
	\node (v4) at (\scalep * 178.71,99.646 * \scalep) {\small $4$};
	\node (v5) at (\scalep * 267.42,234 * \scalep) {\small $5$};
	\node (v6) at (\scalep * 142.71,306 * \scalep) {\small $6$};
	\node (v7) at (\scalep * 18,234 * \scalep) {\small $7$};
	\node (v8) at (\scalep * 18,90 * \scalep) {\small $8$};
	\node (v9) at (\scalep * 142.71,18 * \scalep) {\small $9$};
	\node (v10) at (\scalep * 267.42,90 * \scalep) {\small $10$};	

    \begin{pgfonlayer}{background}
      \draw [\notinbfs]  (v5) -- (v7);
      \draw [\notinbfs]  (v5) -- (v9);
      \draw [\notinbfs]  (v6) -- (v8);
      \draw [\notinbfs]  (v6) -- (v10);
      \draw [\notinbfs]  (v7) -- (v10);
      \draw [\notinbfs]  (v8) -- (v9);
      \draw [arrows={-latex}] (v1) -- (v2);
      \draw [arrows={-latex}] (v1) -- (v3);
      \draw [arrows={-latex}] (v1) -- (v4);
      \draw [arrows={-latex}] (v2) -- (v5);
      \draw [arrows={-latex}] (v2) -- (v6);
      \draw [arrows={-latex}] (v3) -- (v7);
      \draw [arrows={-latex}] (v3) -- (v8);
      \draw [arrows={-latex}] (v4) -- (v9);
      \draw [arrows={-latex}] (v4) -- (v10);
    \end{pgfonlayer}
\end{tikzpicture}
\end{minipage}%
\begin{minipage}{0.5\textwidth}
\centering
\begin{tikzpicture}
	\tikzstyle{every node}=[draw,shape=circle,inner sep=0mm, minimum size=0.6cm,fill=white];
	\node (v1) at (\scalep * 162,159.89 * \scalep) {\small $1$};
	\node (v2) at (\scalep * 206.89,216.18 * \scalep) {\small $2$};
	\node (v3) at (\scalep * 97.13,191.13 * \scalep) {\small $3$};
	\node (v4) at (\scalep * 145.98,89.693 * \scalep) {\small $4$};
	\node (v5) at (\scalep * 226.87,128.65 * \scalep) {\small $5$};
	\node (v6) at (\scalep * 291.74,222.37 * \scalep) {\small $6$};
	\node (v7) at (\scalep * 194.04,300.28 * \scalep) {\small $7$};
	\node (v8) at (\scalep * 72.217,272.47 * \scalep) {\small $8$};
	\node (v9) at (\scalep * 18,159.89 * \scalep) {\small $9$};
	\node (v10) at (\scalep * 72.217,47.304 * \scalep) {\small $10$};
	\node (v11) at (\scalep * 194.04,19.498 * \scalep) {\small $11$};
	\node (v12) at (\scalep * 291.74,97.409 * \scalep) {\small $12$};
    
    \begin{pgfonlayer}{background}
      \draw [\notinbfs]  (v6) -- (v9);
      \draw [\notinbfs]  (v6) -- (v10);
      \draw [\notinbfs]  (v7) -- (v12);
      \draw [\notinbfs]  (v8) -- (v10);
      \draw [\notinbfs]  (v8) -- (v12);
      \draw [\notinbfs]  (v9) -- (v11);
      \draw [\notinbfs]  (v11) -- (v12);
      \draw [arrows={-latex}] (v1) -- (v2);
      \draw [arrows={-latex}] (v1) -- (v3);
      \draw [arrows={-latex}] (v1) -- (v4);
      \draw [arrows={-latex}] (v1) -- (v5);
      \draw [arrows={-latex}] (v2) -- (v6);
      \draw [arrows={-latex}] (v2) -- (v7);
      \draw [arrows={-latex}] (v3) -- (v8);
      \draw [arrows={-latex}] (v3) -- (v9);
      \draw [arrows={-latex}] (v4) -- (v10);
      \draw [arrows={-latex}] (v4) -- (v11);
      \draw [arrows={-latex}] (v5) -- (v12);

    \end{pgfonlayer}
\end{tikzpicture}
\end{minipage}
\caption{Graphs with 10 and 12 vertices without cycles of length 3 and 4 and with the maximum number of edges (arrows are the arcs of a BFS-tree, gray edges are the rest edges of a graph)}
\label{fig:solution_34_10}
\end{figure}

\begin{table}[ht]
\caption{Test SAT case with $m = ex(n; C_3, C_4)$ and constraints \ref{eqn:constraint2a} and \ref{eqn:constraint3a} ($O(n^3)$)}
\label{table:table2}

  \centering
  \setlength\tabcolsep{4pt}
  \begin{tabular}{lllllll}
  \hline\noalign{\smallskip}
$n$ & $\varnothing$, sec & $P_\bfs$, sec & $P_\bfs^+$, sec & $P_\bfs^*$, sec & $P_\mathrm{UNAVOID}$, sec & $\mathrm{sb}_\ell^*$, sec \\
  \noalign{\smallskip}\hline\noalign{\smallskip}
   18 &    5.68 &   \bf 1.04 &    1.08 &    4.12  &   10.38 &   14.37 \\
   19 &    0.98 &   \bf 0.92 &    1.08 &    1.35  &    1.37 &    1.01 \\
   20 &  126.10 &  344.95 &   77.81 & \bf 42.51  & 1234.16 & 2517.60 \\
   21 &  985.73 &  617.24 & \bf 115.89 &  239.00  & 2974.09 & 3528.72 \\
   22 & 2139.65 &  455.36 &  141.15 &  \bf 71.21  & --- & --- \\
   23 & ---     & ---     &   63.54 &  \bf 58.71  & --- & --- \\
   24 & 1765.81 &   52.30 &   \bf 6.22 &   26.64 & 1237.63 & --- \\
   25 & ---     & ---     & --- & \bf 2518.18 & --- & --- \\
   26 & ---     & ---     &  433.77 & \bf 161.68 & --- & --- \\
   27 & ---     & ---     & 1982.06 & \bf 349.10  & --- & --- \\
  \noalign{\smallskip}\hline
  \end{tabular}

\end{table}
\begin{table}[ht]
\caption{Test UNSAT case with $m = ex(n; C_3, C_4) + 1$ and constraints 
\ref{eqn:constraint2a} and \ref{eqn:constraint3a} ($O(n^3)$)}
\label{table:table2U}
\centering
  \setlength\tabcolsep{4pt}
  \begin{tabular}{lllllll}
  \hline\noalign{\smallskip}
  $n$ & $\varnothing$, sec & $P_\bfs$, sec & $P_\bfs^+$, sec & $P_\bfs^*$, sec & $P_\mathrm{UNAVOID}$, sec & $\mathrm{sb}_\ell^*$, sec \\
  \noalign{\smallskip}\hline\noalign{\smallskip}
   10 &   66.98 &    0.59 &   \bf 0.26 &    0.43 &    0.35 & \\
   11 & 2572.06 &    3.03 &    0.79 &   \bf 0.40 &    1.28 & \\
   12 & --- &    5.53 &    2.41 & \bf 1.05 &    5.01 & 2368.04 \\
   13 & --- &   17.79 &    5.76 & \bf 4.23 &   14.80 & --- \\
   14 & --- &   58.63 &   19.36 &   \bf 8.82 &   65.97 & --- \\
   15 & --- &  279.17 &   76.81 & \bf 21.82 &  254.83 & --- \\
   16 & --- & 2408.45 &  597.59 &  \bf 95.28 & 1455.55 & --- \\
   17 & --- & --- & --- & \bf 514.81 & --- & --- \\
   18 & --- & --- & --- & \bf 2773.63 & --- & --- \\
  \noalign{\smallskip}\hline
  \end{tabular}

\end{table}

\subsection{Determining of $ex(n; C_4)$}
\label{sec:4.2}

This problem was considered in~\cite{Clapham1}. We reduce a problem
into CSP instance by the same model as in the previous problem but without
constraints \ref{eqn:constraint2}, \ref{eqn:constraint6} and \ref{eqn:constraint7}.
We add new problem-specific constraints to the model (these constraints were studied in~\cite{Clapham1}):

\begin{equation}
\label{eqn:constraint9}
\delta \le \Delta,
\end{equation}

\begin{equation}
\label{eqn:constraint10}
\Delta(\delta - 1) \le n - 1,
\end{equation}

\begin{equation}
\label{eqn:constraint11}
\delta \le \frac{1}{2}\left(1 + \sqrt{4n - 3}\right).
\end{equation}

An asymptotically precise upper bound has been proven in ~\cite{Jukna1}: 
$$ex(n; C_4) \le \frac{n}{4}\left(1 + \sqrt{4n - 3}\right).$$

The experimental setup was the same as in the previous problem with the time limit of one hour. The results are presented in Table \ref{table:table4} (sat case) and Table \ref{table:table6} (unsat case). Values
in the tables are median of 50 and 5 measurements for the sat and unsat cases correspondingly.
 
\begin{table}[ht]
\centering
\setlength\tabcolsep{4pt}

\caption{Test case SAT with ${m = ex(n; C_4)}$ with constraints 
\ref{eqn:constraint3a} ($O(n^3)$)}
\label{table:table4}
\begin{tabular}{lllllll}
\hline\noalign{\smallskip}
$n$ & $\varnothing$, sec & $P_\bfs$, sec & $P_\bfs^+$, sec & $P_\bfs^*$, sec & $P_\mathrm{UNAVOID}$, sec & $\mathrm{sb}_\ell^*$, sec \\
\noalign{\smallskip}\hline\noalign{\smallskip}
   17 & \bf 42.50   &  297.69     &  280.39     &   57.72     &  205.02     &   55.24     \\
   18 & 1277.22     & 1143.39     & 1279.65     & \bf 258.81  & 1500.15     & 3425.28     \\
   19 & \bf 111.94  &  742.81     & 1445.58     &  292.67     & 3422.95     &  546.35     \\
   20 & ---         & ---         & ---         & \bf 1489.93 & ---         & ---         \\
   21 & ---         & ---         & ---         & \bf 2548.04 & ---         & ---         \\
   22 & ---         & ---         & ---         & \bf 2796.53 & ---         & ---         \\
\noalign{\smallskip}\hline
\end{tabular}

\end{table}
\begin{table}[ht]
\centering
\setlength\tabcolsep{4pt}

\caption{Test case UNSAT with ${m = ex(n; C_4) + 1}$ with constraints 
\ref{eqn:constraint3a} ($O(n^3)$)}
\label{table:table6}
\begin{tabular}{lllllll}
\hline\noalign{\smallskip}
$n$ & $\varnothing$, sec & $P_\bfs$, sec & $P_\bfs^+$, sec & $P_\bfs^*$, sec & $P_\mathrm{UNAVOID}$, sec & $\mathrm{sb}_\ell^*$, sec \\
\noalign{\smallskip}\hline\noalign{\smallskip}
   8 &    4.50     &    0.34     &    0.54     &    0.22     & \bf 0.21    &             \\
    9 &   49.93     &    0.95     &    0.53     & \bf 0.30     &    0.43     &             \\
   10 &  581.05     &    2.20     &    1.11     & \bf 0.51    &    0.92     &             \\
   11 & ---         &    7.56     &    2.84     & \bf   2.26     &    3.68     &             \\
   12 & ---         &   23.21     &   12.48     & \bf   6.29     &   12.73     & ---         \\
   13 &---          &   94.32     &   33.80     & \bf  12.95     &   43.07     & ---         \\
   14 &---          &  620.49     &  172.32     & \bf  30.45     &  233.13     & ---         \\
   15 &---          & 3285.48     &  768.11     & \bf  61.46     & 2008.36     &---          \\
   16 &---          & ---         & ---         & \bf 311.48     & ---         &---          \\
   17 &---          &---          & ---         & \bf 2181.37     & ---         &---          \\
\noalign{\smallskip}\hline
\end{tabular}

\end{table}

\section{Conclusion}
\label{sec:5}
We apply an approach from~\cite{ulyantsev2015bfs} to break symmetries in graph representations.
We also introduce and formally justify two improved predicates: 
$P_\bfs^+$ and $P_\bfs^*$. We demonstrate the efficiency of our approach on some problems from extremal graph theory and compare the impact with existing symmetry-breaking predicates.

\bibliographystyle{splncs03} 
\bibliography{list.bib}   

\newpage

\appendix	
\section{Proof of theorems}
\label{apd:A}

\begin{theorem}
\label{thm:bfs}
$P_\bfs(G)$ is a symmetry-breaking predicate among connected graphs, i.e.
for each connected graph $G$ there exists a graph $G^\prime$ isomorphic to $G$ and
such that $P_\bfs(G^\prime) = 1$.
\end{theorem}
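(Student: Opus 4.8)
The plan is to construct the required isomorphic copy $G'$ directly from a breadth-first traversal of $G$. First I would fix an arbitrary vertex of $G$ as the root, run BFS from it, and record the order in which vertices are discovered (enqueued); I then relabel the $k$-th discovered vertex with the number $k$. Because $G$ is connected, BFS reaches every vertex, so this assigns each of the $|V(G)|$ vertices a distinct label in $1..|V(G)|$, and the relabeling is a bijection. Hence $G'$ is isomorphic to $G$. It remains to check that the traversal we just performed witnesses $P_\bfs(G') = 1$: by construction the $k$-th vertex visited carries label $k$, which is exactly the condition in the definition.

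To make this airtight against the CSP encoding given by equations (1)--(2), I would verify the two defining constraints on the parent labels $p_i$. For the second constraint I claim that the discoverer of vertex $j$ is precisely its smallest-labeled neighbor. Indeed, BFS processes vertices in increasing label order; when the smallest-labeled neighbor $i$ of $j$ is processed, $j$ cannot already be discovered, since an earlier discovery would require an adjacent vertex with label smaller than $i$, contradicting minimality. Thus $i$ discovers $j$ and $p_j = i$, matching $p_j = i \Leftrightarrow A[i, j] \wedge \nexists k < i: A[k, j]$. For the first constraint I would observe that the children of a fixed vertex are exactly the neighbors it enqueues when processed, and they receive a consecutive block of labels; since vertices are processed in increasing label order, these blocks occur in increasing order of parent label, giving $p_i \le p_{i+1}$.

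The argument is short, and the only point requiring care is the interface between the two descriptions of $P_\bfs$ --- the informal ``there exists a BFS traversal'' of the definition and the algebraic constraints (1)--(2). The substantive observation is that labeling by BFS discovery order forces the parent-label sequence to be non-decreasing; everything else is bookkeeping. I expect the main (still mild) obstacle to be phrasing the discovery-order argument cleanly enough that the consecutive-block structure of children, and hence monotonicity of the parents, is evident without an explicit induction on BFS levels. Connectedness enters in exactly one place --- guaranteeing that the traversal labels \emph{all} vertices --- and dropping it would break the statement, so I would flag that hypothesis explicitly.
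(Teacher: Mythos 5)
Your proposal is correct and follows essentially the same route as the paper's proof: pick an arbitrary root, relabel vertices in BFS discovery order, and observe that the resulting graph $G^\prime$ is isomorphic to $G$ and witnesses $P_\bfs(G^\prime)=1$ by construction. Your additional verification that the relabeling satisfies the CSP constraints (1)--(2) (smallest-labeled neighbor as parent, consecutive label blocks for children giving $p_i \le p_{i+1}$) is sound bookkeeping that the paper simply omits.
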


\begin{proof}
Let $v$ be an arbitrary vertex of $G$. Since $G$ is connected 
then BFS traversal that traverses all vertices of $G$ always exists.
Let $G^\prime$ be a graph obtained from $G$ by renumeration of vertices
in the order of mentioned BFS-traversal. $G^\prime$ is isomorphic to $G$
and is enumerated in BFS order. \qed
\end{proof}
\begin{theorem}
\label{thm:bfs_plus}
$P_\bfs^+(G)$ is a symmetry-breaking predicate among connected graphs, i.e.
for each connected graph $G$ there exists a graph $G^\prime$ isomorphic to $G$ and
such that $P_\bfs^+(G^\prime) = 1$.
\end{theorem}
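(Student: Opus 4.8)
The plan is to adapt the proof of Theorem~\ref{thm:bfs}, exploiting the freedom we have in choosing the starting vertex of the BFS traversal. Recall that $P_\bfs^+(G') = 1$ requires two things: that $G'$ be BFS-enumerated, and that the vertex labeled $1$ have degree equal to the maximum degree over all vertices. The first condition is exactly $P_\bfs(G') = 1$, which Theorem~\ref{thm:bfs} already guarantees for \emph{some} relabeling, but there the starting vertex was arbitrary. Here I would commit to a specific choice that makes the second condition automatic.

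First I would select a vertex $v \in V(G)$ of maximum degree, i.e.\ $\deg v = \max_{1 \le k \le n} \deg v_k$ (such a vertex exists since $G$ has finitely many vertices). Since $G$ is connected, a BFS traversal started at $v$ visits every vertex of $G$. I would then let $G'$ be the graph obtained from $G$ by relabeling each vertex with its position in this traversal, so that $v$ receives label $1$. This is exactly the construction of Theorem~\ref{thm:bfs}, specialized to a maximum-degree root.

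It remains to verify both conjuncts of $P_\bfs^+$. By the argument of Theorem~\ref{thm:bfs}, the relabeling is precisely a BFS enumeration, so $P_\bfs(G') = 1$. For the degree condition, I would observe that relabeling is a graph isomorphism and therefore preserves the degree of every vertex; hence the degree of the vertex labeled $1$ in $G'$ equals $\deg v$, which is maximal, and this maximum value is itself unchanged by the relabeling. Thus $\deg v_1 = \max_{1 \le k \le n} \deg v_k$ in $G'$, and combined with $P_\bfs(G') = 1$ we conclude $P_\bfs^+(G') = 1$. I do not expect a genuine obstacle: the entire content is the observation that the root in Theorem~\ref{thm:bfs} may be chosen freely, and the only point requiring any care is noting that renumbering is degree-preserving, so that ``maximum degree'' is a well-defined isomorphism-invariant property transported along with the relabeling.
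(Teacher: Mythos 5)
Your proof is correct and follows exactly the paper's approach: the paper likewise proves Theorem~\ref{thm:bfs_plus} by rerunning the construction of Theorem~\ref{thm:bfs} with the BFS rooted at a maximum-degree vertex. Your added remark that relabeling preserves degrees (so the degree condition transports to $G'$) is the only detail the paper leaves implicit.
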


\begin{proof}
The proof is almost equivalent to the proof of theorem \ref{thm:bfs}, but we should choose $v$ as a vertex with maximum degree. The rest of proof is the same.
\end{proof}
\begin{theorem}
\label{thm:bfs_star}
$P_\bfs^*(G)$ is a symmetry-breaking predicate among connected graphs, i.e.
for each connected graph $G$ there exists a graph $G^\prime$ isomorphic to $G$ and
such that $P_\bfs^*(G^\prime) = 1$.
\end{theorem}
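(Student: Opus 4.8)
The plan is to obtain a good enumeration from a $P_\bfs^+$ one by an extremal (exchange) argument rather than by an explicit construction, since the BFS-tree itself is not fixed once we are allowed to reorder a layer: reordering the vertices of one layer changes the parent pointers (smallest-labelled neighbour) of the next, and hence the weights. By Theorem~\ref{thm:bfs_plus} the set of enumerations $G'$ isomorphic to $G$ with $P_\bfs^+(G') = 1$ is non-empty, and it is finite, so I would pick from it an enumeration whose sequence of subtree weights $(w_1, w_2, \ldots, w_n)$, read in BFS-label order, is lexicographically maximal (note that $w_1 = n$ is common to all of them). The claim is that this maximal enumeration already satisfies $P_\bfs^*$.

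To prove the claim I would argue by contradiction. Suppose some vertex $v$ has two consecutive children $c_t, c_{t+1}$ in the BFS-tree, occupying consecutive labels $a$ and $a+1$ (the children of a fixed vertex form a contiguous block of their layer), with $w(c_t) < w(c_{t+1})$. Starting from the same BFS traversal, I would produce a competing traversal that is identical up to the moment $v$ is processed, but enqueues $c_{t+1}$ before $c_t$, leaving the earlier children $c_1, \ldots, c_{t-1}$ in place. This is again a legitimate BFS traversal, since only a tie-breaking choice among the neighbours of $v$ has changed, so the resulting $G''$ satisfies $P_\bfs$; and because the root, and hence the maximum-degree condition, is untouched, $P_\bfs^+(G'') = 1$. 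By construction the labels $1, \ldots, a-1$, and therefore the weights $w_1, \ldots, w_{a-1}$, are unchanged.

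The heart of the argument is then to show that the weight now sitting at position $a$ is strictly larger than the old $w(c_t)$, which contradicts lexicographic maximality. Giving $c_{t+1}$ the smaller label $a$ means it is explored earlier than before, so in the new traversal it can only win more of the vertices for which it competes; the monotonicity statement to establish is that the subtree of $c_{t+1}$ in $G''$ contains its old subtree, whence its new weight is at least $w(c_{t+1}) > w(c_t)$. I expect this monotonicity to be the main obstacle: because the graph carries non-tree edges between the subtrees of $c_t$ and $c_{t+1}$ (and to vertices below them), promoting $c_{t+1}$ can re-route parent pointers deep in the tree, so the two subtrees are redistributed rather than simply swapped. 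I would handle it by coupling the two BFS runs layer by layer and showing that every vertex the old run placed under $c_{t+1}$ is reached no later, and through an ancestor inside the old $c_{t+1}$-subtree, in the new run; the contiguity of each parent's children within a layer is what keeps this bookkeeping consistent.

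Two checks guide the direction of the exchange. First, the descending order is essential: the promotion increases the lexicographic rank precisely because the \emph{heavier} subtree is moved forward, which is why the opposite (ascending) convention fails, as the $C_4$ example in Figure~\ref{fig:antiexample} shows. Second, the argument is uniform over all vertices $v$, not only the root, so the absence of any mis-ordered adjacent sibling pair anywhere in the tree is forced simultaneously; this yields $P_\bfs^*(G') = 1$ for the extremal enumeration and completes the proof.
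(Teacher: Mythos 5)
Your proposal is correct, and its technical heart coincides with the paper's, but it packages that heart differently, so the two are worth comparing. The paper is constructive: its Lemma~\ref{lmm:lemma1} is exactly your exchange step (swap two adjacent mis-ordered siblings, relabel by the new BFS traversal, and prove by a layer-by-layer induction that every vertex outside the two swapped subtrees keeps its descendant set, that the promoted subtree can only gain vertices, and that the demoted one can only lose them); Lemma~\ref{lmm:lemma2} then iterates the swap to sort the root's children, and the theorem is obtained by applying this sorting top-down at every vertex. You instead run an extremal argument: among the finitely many enumerations satisfying $P_\bfs^+$ (non-empty by Theorem~\ref{thm:bfs_plus}), pick one with lexicographically maximal weight sequence in label order, and show that a single exchange at any violated sibling pair strictly increases that sequence. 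What your route buys is that all iteration bookkeeping disappears: you need not argue termination of the sorting, nor that sorting the children of one vertex preserves sortedness already established elsewhere --- a point the paper handles only informally ("swapping of vertices have an influence only on vertices from their subtrees"). What the paper's route buys is an explicit construction of the good enumeration rather than a pure existence statement.

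One caveat on your write-up: for the lexicographic contradiction you need not only the monotonicity you flag as the main obstacle (the promoted subtree's new weight is at least $w(c_{t+1})$, the paper's Statement~2), but also that the weights at positions $1,\ldots,a-1$ are \emph{exactly} unchanged. That does not follow from "the labels $1,\ldots,a-1$ are unchanged" alone, since the weights of $v$, of its ancestors, and of earlier siblings depend on the subtree structure below label $a$, which the swap redistributes. You need the analogue of the paper's Statement~1: the union of the descendant sets of $c_t$ and $c_{t+1}$ is preserved by the swap, and every vertex outside that union keeps its descendant set. This is established by the same layer-by-layer coupling you already propose, so your plan goes through, but it should be stated as a second obligation of that coupling rather than obtained "by construction".
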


\begin{lemma}
\label{lmm:lemma1}
Let $G$ be a connected graph and $P_\bfs(G)$ holds.
Let $\deg v_0 = k$ and $w_i = w(child_{G, i}(v_0))$.
Let $w_i \le w_{i + 1}$. Then $\exists G^\prime \simeq G:$
$P_\bfs(G^\prime) \wedge w_k = w^\prime_k (\forall k \neq i, 
\neq j) \wedge w^\prime_i \ge w^\prime_{i + 1}$, where 
$w^\prime_k = w(child_{G^\prime, k}(v_0^\prime))$.
\end{lemma}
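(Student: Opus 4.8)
The plan is to construct $G'$ by \emph{re-running} breadth-first search on $G$ rather than by statically permuting labels. Fix a BFS traversal witnessing $P_\bfs(G)=1$, write $c_\ell=\mathrm{child}_{G,\ell}(v_0)$, and let $T_\ell$ be the subtree of the BFS-tree rooted at $c_\ell$, so that $w_\ell=|T_\ell|$. I run BFS again from the same root, enqueuing the neighbours of every processed vertex in exactly the same relative order as before, with the single exception that when $v_0$ is processed I enqueue $c_{i+1}$ ahead of $c_i$. Let $G'$ be $G$ relabelled by the visit order of this modified traversal, and let $v_0'$ be the image of $v_0$ (which keeps its label, as $v_0\notin T_i\cup T_{i+1}$). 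Since $G'$ is by definition the visit order of a genuine BFS, the defining constraints of $P_\bfs$ hold for it automatically: in any BFS the parent of a vertex is its smallest-labelled neighbour and parent labels are non-descending. Hence $P_\bfs(G')=1$ with no further checking, which is precisely why I prefer a re-run to an explicit permutation.

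Next I would show that everything outside $T_i\cup T_{i+1}$ is untouched. The engine is a structural fact, proved by induction on the layers from the characterisation ``parent $=$ smallest neighbour'': inside each BFS layer the descendants of $v_0$ occupy one contiguous block, split into contiguous sub-blocks $T_1,\dots,T_m$ in the order of $c_1,\dots,c_m$; in particular a vertex of $T_\ell$ has no neighbour in an earlier child-subtree one layer up, since such a neighbour would undercut its recorded parent. Because the modification only exchanges the relative order of $T_i$ and $T_{i+1}$, this fact lets me prove by induction down the layers that every vertex of $T_\ell$ with $\ell\neq i,i+1$ is rediscovered with the same parent. Consequently each such $T_\ell$ and its weight are preserved, giving $w'_k=w_k$ for all $k\neq i,i+1$, and the combined set $T_i\cup T_{i+1}$ is preserved as a whole (merely repartitioned).

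For the inequality $w'_i\ge w'_{i+1}$ I would argue that processing $c_{i+1}$ first can only enlarge its subtree. A vertex of $T_{i+1}$ has its smallest neighbour inside $T_{i+1}$ (a neighbour in $T_i$ one layer up would already have contradicted minimality in $G$), and after the swap $T_{i+1}$ carries even smaller labels, so every vertex of $T_{i+1}$ stays below $c_{i+1}$. Thus the new position-$i$ subtree $T'_i$ contains all of $T_{i+1}$, while the new position-$(i+1)$ subtree $T'_{i+1}$ is contained in the old $T_i$. Combined with the hypothesis $w_i\le w_{i+1}$ this yields $w'_i\ge|T_{i+1}|=w_{i+1}\ge w_i\ge|T'_{i+1}|=w'_{i+1}$, as required.

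The main obstacle is exactly the interaction of cross edges with the non-descending-parent condition, and it is what forces the re-run. If one instead simply swaps the two subtree blocks label-by-label, a vertex of $T_i$ joined by a cross edge to $T_{i+1}$ one layer up is ``captured'': after the block is moved it acquires a smaller parent from $T_{i+1}$, the layer's parent sequence ceases to be non-descending, and $P_\bfs$ fails, so the naive construction need not be BFS-enumerated. Re-running BFS repairs this by re-attaching each captured vertex in its correct place, at the price that $T_i$ and $T_{i+1}$ are repartitioned rather than cleanly swapped --- which is precisely why the statement claims only $w'_i\ge w'_{i+1}$ and not $w'_i=w_{i+1}$. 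The delicate part of a full write-up is verifying that these captures cascade only within $T_i\cup T_{i+1}$ and never disturb the remaining subtrees.
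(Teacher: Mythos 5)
Your proposal is correct and takes essentially the same route as the paper's own proof: the paper also constructs $G^\prime$ by re-running the BFS traversal with $v_i$ and $v_{i+1}$ exchanged in the visit order (so $P_\bfs(G^\prime)$ holds by construction), proves by layer-by-layer induction that all subtrees other than the two swapped ones are preserved (its Statement~1), and derives the containments yielding the same chain $w^\prime_i \ge w_{i+1} \ge w_i \ge w^\prime_{i+1}$ (its Statements~2 and~3). The only difference is presentational, namely your explicit discussion of why a static label-block swap would break the non-descending-parent condition, which the paper leaves implicit.
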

\begin{proof}
We construct $G^\prime$ immediately by enumerating vertices from $G$.
Let $\left\{v_i\right\}_{i = 0}^{n - 1}$ be vertices of $G$ and 
$\left\{v^\prime_i\right\}_{i = 0}^{n - 1}$ be vertices of $G^\prime$.
Start vertex will rest unchanged ($v_0 = v^\prime_0$).
Vertices from the first layer (from $v_1$ to $v_k$) also
stay unchanged except of $v_i$ and $v_{i + 1}$, which are swapped
($v^\prime_{i + 1} = v_i, v^\prime_i = v_{i + 1}$).
Example of such renumbering is demonstrated in Figure \ref{fig:lemma3.1.1}.
We consider a BFS traversal of $G$ which visits 
vertices from $0$ to $k$ in a considered order: $v_0, v_1, \ldots, v_{i + 1}, v_i, \ldots, v_k$.
Rest of $v^\prime_i$ (for $i > k$) are defined as an index of $v_i$ in this BFS traversal.
$\{v^\prime_i\}_{i=0}^{n-1}$ is a sequence of vertices in BFS traversal so $P_\bfs(G^\prime)$ holds.
Also $G^\prime \simeq G$ since $V(G^\prime) = \left\{v^\prime_i\right\}_{i = 0}^{n - 1}$
is a permutation of $V(G) = \left\{v_i\right\}_{i = 0}^{n - 1}$.
Now we need to check if the weights of subtrees property holds.
\par
Throughout this proof we say that $S_1 \subset V(G_1)$ and 
$S_2 \subset V(G_2)$ are \emph{equal} (where $G_1$ and $G_2$
are isomorphic graphs and $G_1 = \pi(G_2)$) if $S_1 = \pi(S_2)$.

\textbf{Statement 1.} For all $j \ne i,\ j \ne i + 1$ 
holds $w(v^\prime_j) = w(v_j)$.
\par
Let $W_{i, i + 1}$ be a set of vertices that are descendants in BFS-tree
of $v_i$ or $v_{i + 1}$. Let $W^\prime_{i, i + 1}$ be a set of vertices 
that are descendants in BFS-tree of $v^\prime_i$ or $v^\prime_{i + 1}$.  
Now we check that these sets are \emph{equal}.
Let $W_j$ be an intersection of $V_{i, i + 1}$ and vertices from $j$-th 
layer in BFS-tree. Let $W^\prime_j$ be the same for $G^\prime$.
We show by induction that $\forall j: W_j = W^\prime_j$ (the equal sign means 
sets are \emph{equal}).
\\
\textbf{Basis}: $j = 1$, $W_1 = (v_i, v_{i + 1})$, $W^\prime_1 = 
(v^\prime_i, v^\prime_{i + 1})$ so $W_1 = W^\prime_1$.
\\
\textbf{Inductive step}: $W_j$ is a set of vertices for which
there is an edge from vertex from $W_{j - 1}$ and 
there is no edge from vertices with number less than numbers from $W_{j-1}$.
The same for $W^\prime_j$ in $G^\prime$. 
But following the induction hypothesis
$W_{j - 1} = W^\prime_{j - 1}$, sets of vertices
are reachable from them and not reachable from previous 
vertices are equal too.
\par
Consequently $\forall j: W_j = W^\prime_j$, so
$W_{i, i + 1} = W^\prime_{i, i + 1}$. It becomes clear,
that all differences in BFS-tree between $G$ and $G^\prime$ 
are entirely located in subtrees of $v_i$ and $v_{i + 1}$, so 
subtrees of other vertices remain unchanged and
$w(v^\prime_j) = w(v_j)$ for $j \ne i,\ j \ne i + 1$.

\textbf{Statement 2.} $w(v^\prime_i) \ge w(v_{i + 1})$.
\par
Consider an intersection of subtree of $v_{i + 1}$ and $j$-th 
layer of BFS-tree. Denote these sets in $G$ and $G^\prime$ as
$W_j$ and $W^\prime_j$, respectively. A vertex is in this set
if it wasn't visited earlier during BFS traversal and there exists
an edge from $W_{j - 1}$ to this vertex.
We show by induction that $\forall j: W_j \subseteq  
W^\prime_j$.
\\
\textbf{Basis}: $j = 1$, $W_1 = \{v_{i + 1}\}$, $W^\prime_1 = 
\{v^\prime_i\}$, so $W_1 = W^\prime_1$ and
$W_1 \subseteq W^\prime_1$.
\\
\textbf{Inductive step}: 
The order of vertex in $G^\prime$ was decremented ($v_{i+1} \rightarrow v_i^\prime$), 
so the order of all vertices in subtree in each layer is greater or equal
than the order of vertices in subtree of $v_{i+1}$.
The order of vertices relative to other subtrees remains unchanged.
Consequently, all vertices not visited by BFS traversal
remain not visited and there can be vertices that become
not visited after vertices swap. Following induction hypothesis,
$W_{j - 1} \subseteq W^\prime_{j - 1}$ and for each vertex 
set of new vertices at least as large as it was before swapping,
consequently $W_j \subseteq W^\prime_j$.

\textbf{Statement 3.} $w(v^\prime_{i + 1}) \le w(v_i)$.
\par
The proof is the same as for statement 2 with the only difference:
we should prove that $W_j \supseteq W^\prime_j$.
\par
From statements 2 and 3 it follows that 
$w^\prime_i \ge w_{i + 1} \ge w_i \ge w^\prime_{i + 1}$,
so $w^\prime_i \ge w^\prime_{i + 1}$ and the remaining weights 
remain unchanged (from statement 1). Also $G^\prime
\simeq G$ and $P_\bfs(G^\prime)$ holds. Hence the lemma holds.
\qed
\end{proof}

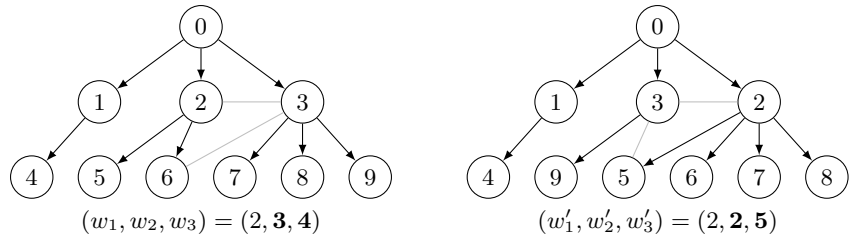
\begin{figure}[ht]
\centering
\begin{subfigure}
  \centering
    \begin{minipage}{0.48\textwidth}
    \centering
    \begin{tikzpicture}
	  \tikzstyle{every node}=[draw,shape=circle];
      \node (v0) at (2.65, 3) {\small $0$};
      \node (v1) at (1.30, 2) {\small $1$};
      \node (v2) at (2.65, 2) {\small $2$};
      \node (v3) at (4.00, 2) {\small $3$};
      \node (v4) at (0.4, 1) {\small $4$};
      \node (v5) at (1.3, 1) {\small $5$};
      \node (v6) at (2.2, 1) {\small $6$};
      \node (v7) at (3.1, 1) {\small $7$};
      \node (v8) at (4.0, 1) {\small $8$};
      \node (v9) at (4.9, 1) {\small $9$};
      \draw [arrows={-latex}] (v0) -- (v1);   
      \draw [arrows={-latex}] (v0) -- (v2);   
      \draw [arrows={-latex}] (v0) -- (v3);   
      \draw [arrows={-latex}] (v1) -- (v4);   
      \draw [arrows={-latex}] (v2) -- (v5);   
      \draw [arrows={-latex}] (v3) -- (v7);   
      \draw [arrows={-latex}] (v3) -- (v8);   	
      \draw [arrows={-latex}] (v3) -- (v9);   
      \draw [\notinbfs] (v2) -- (v3);
      \draw [\notinbfs] (v3) -- (v6);
      \draw [arrows={-latex}] (v2) -- (v6);
	\end{tikzpicture}\\
	$(w_1, w_2, w_3) = (2, \textbf 3, \textbf 4)$
    \end{minipage}
\end{subfigure}
\begin{subfigure}
  \centering
    \begin{minipage}{0.48\textwidth}
    \centering
    \begin{tikzpicture}
	  \tikzstyle{every node}=[draw,shape=circle];
      \node (v0) at (2.65, 3) {\small $0$};
      \node (v1) at (1.30, 2) {\small $1$};
      \node (v2) at (2.65, 2) {\small $3$};
      \node (v3) at (4.00, 2) {\small $2$};
      \node (v4) at (0.4, 1) {\small $4$};
      \node (v5) at (1.3, 1) {\small $9$};
      \node (v6) at (2.2, 1) {\small $5$};
      \node (v7) at (3.1, 1) {\small $6$};
      \node (v8) at (4.0, 1) {\small $7$};
      \node (v9) at (4.9, 1) {\small $8$};
      \draw [arrows={-latex}] (v0) -- (v1);   
      \draw [arrows={-latex}] (v0) -- (v2);   
      \draw [arrows={-latex}] (v0) -- (v3);   
      \draw [arrows={-latex}] (v1) -- (v4);   
      \draw [arrows={-latex}] (v2) -- (v5);   
      \draw [arrows={-latex}] (v3) -- (v7);   
      \draw [arrows={-latex}] (v3) -- (v8);   
      \draw [arrows={-latex}] (v3) -- (v9);   
      \draw [\notinbfs] (v2) -- (v3);
      \draw [\notinbfs] (v2) -- (v6);
      \draw [arrows={-latex}] (v3) -- (v6);
	\end{tikzpicture}\\
	$(w_1^\prime, w_2^\prime, w_3^\prime) = (2, \textbf 2, \textbf 5)$
    \end{minipage}
\end{subfigure}
\caption{BFS-tree before and after renumeration of vertices}
\label{fig:lemma3.1.1}
\end{figure}

\begin{lemma}
\label{lmm:lemma2}
Let $G$ be a connected graph and $P_\bfs(G)$ holds. Let $\deg v_0 = k$ and $w_i = w(child_i(v_0))$. Then $\exists\ G^\prime \simeq G: P_\bfs(G^\prime) 
\wedge {w_1^\prime \ge w_2^\prime \ge \ldots \ge w_k^\prime}$.
\end{lemma}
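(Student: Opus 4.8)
The plan is to bootstrap Lemma~\ref{lmm:lemma2} directly from Lemma~\ref{lmm:lemma1} by an iterative sorting argument. Lemma~\ref{lmm:lemma1} already provides the crucial local move: given a BFS-enumerated $G$ and an adjacent pair of children $v_i, v_{i+1}$ of the root that are \emph{out of order} (i.e. $w_i \le w_{i+1}$), we can produce an isomorphic BFS-enumerated $G'$ in which those two weights are swapped into the correct descending order while \emph{every other child weight is preserved}. This is exactly the elementary transposition that a bubble sort performs on an array. So the approach is to run bubble sort on the sequence of subtree weights $(w_1, \ldots, w_k)$ of the root's children, realizing each comparison-and-swap by an application of Lemma~\ref{lmm:lemma1}.

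First I would set up the induction / termination argument carefully. Starting from the given $G$ with $P_\bfs(G)$, I repeatedly scan the children weights $w_1, \ldots, w_k$ for an adjacent inversion, that is, an index $i$ with $w_i < w_{i+1}$; note Lemma~\ref{lmm:lemma1} requires only $w_i \le w_{i+1}$, so it certainly applies to any strict inversion. Whenever such an $i$ exists, I invoke Lemma~\ref{lmm:lemma1} to obtain an isomorphic BFS-enumerated graph in which $w_i$ and $w_{i+1}$ are interchanged and all other weights are unchanged. The key point the lemma guarantees is that $P_\bfs$ is preserved and isomorphism to $G$ is preserved at every step, so the chain $G \simeq G^{(1)} \simeq G^{(2)} \simeq \cdots$ consists entirely of graphs isomorphic to $G$ and satisfying $P_\bfs$.

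Next I would argue termination. Each application of Lemma~\ref{lmm:lemma1} strictly decreases the number of inversions in the weight sequence (a standard fact about adjacent transpositions that correct an out-of-order pair), and the number of inversions is a nonnegative integer bounded by $\binom{k}{2}$. Hence after finitely many steps no adjacent inversion remains, which for a linear sequence means $w_1' \ge w_2' \ge \cdots \ge w_k'$. Let $G'$ be the graph produced at that final step. By transitivity of isomorphism $G' \simeq G$, and $P_\bfs(G')$ holds, and the children of the root are now sorted in nonincreasing order of subtree weight, which is exactly the claim.

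The main obstacle — and the place where I would be most careful — is the bookkeeping that Lemma~\ref{lmm:lemma1} is actually applicable at each intermediate stage rather than only to the original $G$. Lemma~\ref{lmm:lemma1} is stated for an arbitrary connected BFS-enumerated graph, so each intermediate $G^{(t)}$ qualifies; but I must make sure that after a swap the \emph{new} weight sequence is the one I continue to sort, and that the lemma's ``all other weights unchanged'' clause lets me treat the process as genuine bubble sort on a fixed-length array. A subtle point worth a sentence is that the root degree $k$ is an isomorphism invariant and is preserved by each move, so the length of the sequence never changes. Once these invariants (isomorphism, $P_\bfs$, root degree $k$, and the decrease in inversion count) are pinned down, the argument closes immediately, since any sequence with no adjacent inversions is fully sorted.
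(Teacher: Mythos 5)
Your high-level strategy is the same as the paper's: sort the weights of the root's children by iterating Lemma~\ref{lmm:lemma1} (the paper repeatedly moves the current maximum toward the front, a selection sort; you propose bubble sort on adjacent inversions). However, your termination argument rests on a misreading of what Lemma~\ref{lmm:lemma1} actually provides. The lemma is \emph{not} an exact transposition of the two weights: its conclusion is only that all weights other than $w_i, w_{i+1}$ are preserved and that $w'_i \ge w'_{i+1}$; its proof in fact establishes $w'_i \ge w_{i+1} \ge w_i \ge w'_{i+1}$, with only the union of the two subtrees (hence the sum $w'_i + w'_{i+1} = w_i + w_{i+1}$) preserved. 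The reason is that renumbering the two children changes the BFS-tree itself: every vertex reachable from both subtrees migrates to whichever child now comes first. The paper's Figure~\ref{fig:lemma3.1.1} shows exactly this: the pair of subtree weights $(3,4)$ becomes $(5,2)$ after the swap, not $(4,3)$.

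Consequently your claim that each application ``strictly decreases the number of inversions'' is false. Take a root with four children $c_1, c_2, c_3, c_4$ whose subtree weights are $(5,5,3,4)$, where the subtree of $c_3$ is $\{c_3, x, y\}$, the subtree of $c_4$ is $\{c_4, a, b, c\}$, and both $x$ and $y$ are adjacent to $c_3$ and to $c_4$. Fixing the unique adjacent inversion at positions $(3,4)$ hands $x$ and $y$ over to $c_4$, producing the weight sequence $(5,5,6,1)$, which has two inversions instead of one; so the inversion count is not a valid potential and your proof, as written, does not establish termination. The gap is repairable without changing your strategy: when you fix a strict inversion $w_i < w_{i+1}$, positions $1, \ldots, i-1$ are untouched and the value at position $i$ strictly increases (because $w'_i \ge w_{i+1} > w_i$), so the weight sequence strictly increases in lexicographic order; since these sequences range over the finite set of length-$k$ sequences of positive integers summing to $|V(G)|-1$, the process must stop, and at that point no strict adjacent inversion remains, i.e.\ the sequence is non-increasing. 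With that corrected potential your argument closes, and its bookkeeping is arguably more careful than the paper's own sketch, which quietly assumes that sorting a suffix cannot disturb the relation to the already-placed prefix even though the same weight-growth phenomenon can violate it.
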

\begin{proof}
Consider a sequence $W^0_0 = \left[w_1, w_2, \ldots, w_k\right]$. 
Let $w_{i + 1}$ be a maximum element from $W_0^0$. 
Then $w_i \le w_{i + 1}$.
Following lemma \ref{lmm:lemma1}, there is a graph $G^1_1$ with 
a weights' sequence of subtrees of the root's children 
$W^1_0$ where $w^1_i \ge w^1_{i + 1}$.
Using this operation one can get a graph with a 
weights' sequence $W_1$ and $\forall j: w^1_1 \ge w^1_j$.
\par 
Similarly there exists a graph with weights' sequence $W_2$,
$w^2_1 \ge w^2_2$ and $\forall j \ge 2: w^2_2 \ge w^2_j$.
Continuing the process one can obtain a graph with a
weights' sequence $W_k$ where $w_1^k \ge w_2^k \ge \ldots \ge w_k^k$. 
\qed
\end{proof}

\begin{proof}[of theorem \ref{thm:bfs_star}]
By lemma \ref{lmm:lemma2} there exists a graph 
with non-ascending weights' sequence of root's children.
It is clear from the proof of lemma \ref{lmm:lemma1} that 
swapping of vertices have an influence only on vertices from their subtrees.
So one can apply lemma \ref{lmm:lemma2} to an
arbitrary vertex, not only to root.
Resulted graph preserves structure of BFS-tree
for vertices from subtrees of other vertices.
\par 
By lemma \ref{lmm:lemma2} there exists a graph 
$G^1_1$ with sorted weights' sequence
of root's children. One can apply this lemma
for the first vertex from the first layer and obtain
a graph $G^1_2$ with sorted weights' sequences of the root
and the first (leftmost) child of the root.
By applying this lemma in a such way one can obtain the
graph $G^\prime$ with the following property: weights' sequences 
of each vertex in graph are sorted in the non-ascending order. This ends
the proof of the theorem. \qed
\end{proof}

\end{document}